\newtheorem{corollary}{Corollary}[section]
\newtheorem{definition}{Definition}[section]
\newtheorem{theorem}{Theorem}[section]
\newtheorem{proposition}{Proposition}[section]
\newtheorem{lemma}{Lemma}[section]
\theoremstyle{remark}
\theoremstyle{remark}
\newtheoremstyle{red}
  {\topsep}
  {\topsep}
  {}
  {0pt}
  {\bfseries}
  {}
  {.5em}
  {\ifstrequal{#3}{red}{\color{red}}{}\thmname{#1}\thmnumber{ #2:}}
\theoremstyle{red}
\newtheorem{algorithm}{Algorithm}[section]
\title{A general Bayesian bootstrap for censored data based on the beta-Stacy process}
\author{Andrea Arf\`e\footnote{Department of Epidemiology and Biostatistics, Memorial Sloan Kettering Cancer Center, New York, NY 10017, United States. Website: \url{andreaarfe.wordpress.com}. E-mail: \url{arfea@mskcc.org}.}, Pietro Muliere\footnote{Department of Decision Sciences, Bocconi University, 20136 Milan, Italy. E-mail: \url{pietro.muliere@unibocconi.it}}}
\begin{document}
\maketitle

\begin{abstract}
We introduce a novel procedure to perform Bayesian non-parametric inference with right-censored data, the \emph{beta-Stacy bootstrap}. This approximates the posterior law of summaries of the survival distribution (e.g. the mean survival time). More precisely, our procedure approximates the joint posterior law of functionals of the beta-Stacy process, a non-parametric process prior that generalizes the Dirichlet process and that is widely used in survival analysis. The beta-Stacy bootstrap generalizes and unifies other common Bayesian bootstraps for complete or censored data based on non-parametric priors. It is defined by an exact sampling algorithm that does not require tuning of Markov Chain Monte Carlo steps. We illustrate the beta-Stacy bootstrap by analyzing survival data from a real clinical trial. 
\end{abstract}

\textbf{Keywords:} Censored data; Bayesian bootstrap; Bayesian non-parametric; beta-Stacy process.
 
\section{Introduction}

Survival data is often censored, hindering statistical inferences \citep{Kalbfleisch2002}. In this setting, the goal is often to perform inference on specific summaries $\phi(G)$ of the  cumulative distribution function $G(x)$ (defined for $x\geq 0$) that generated the observed survival times $Y_1,\ldots,Y_n$, e.g. the expected survival time, or the probability to survive past a landmark time-point. 

We introduce \emph{beta-Stacy bootstrap}, a new method to perform Bayesian non-parametric inference for functionals $\phi(G)$ of the distribution function $G$ using censored data. Specifically, the proposed approach generates approximate samples from the posterior law of $\phi(G)$ obtained by assuming that $G$ is a beta-Stacy process \citep{Walker1997}. This process defines a non-parametric prior law for distribution functions widely used with censored data (\citealp{walker1998}; \citealp{AlLabadi2013}; \citealp{Arfe2018a}). The beta-Stacy process extends the classical Dirichlet process of \citet{Ferguson1973} and it is conjugate to both complete and right-censored data \citep{Walker1997}. It is also strictly related to the \emph{beta process} of \citet{Hjort1990}: $G$ is a beta-Stacy process if and only if its cumulative hazard function is a beta process \citep{Walker1997}.

The proposed approach belongs to the family of \emph{Bayesian bootstrap} procedures pioneered by \citet{Rubin1981}. In addition to Rubin's, this family includes the \emph{proper Bayesian bootstrap} of \citet{Muliere1996}, the \emph{Bayesian bootstrap for censored data} of \citet{Lo1993}, and others \citep{Lo1991, Kim2003, Lyddon2019}. Similarly to Efron's classical bootstrap \citep{Efron1986}, Bayesian bootstraps repeatedly re-sample and/or re-weight the observed data to induce a probability distribution for $\phi(G)$. More precisely, Bayesian bootstraps generate approximate samples from the posterior distribution of $\phi(G)$ associated to some non-parametric prior for $G$ (for connections with Efron's frequentist procedure, see \citealp{Lo1987, Lo1991, Lo1993, Muliere1996}). Interest in these sampling algorithms has recently increased thanks to their scalability and computational simplicity---e.g. they do not require tuning of Markov Chain Monte Carlo steps \citep{Lyddon2018, Barrientos2020}.



We show that the beta-Stacy bootstrap generalizes other common Bayesian bootstrap procedures. These include those of \citet{Rubin1981} and \citet{Muliere1996}, which are at the core of other recent proposals \citep{Lyddon2019, Barrientos2020}, but cannot be applied in presence of censoring. They also include Lo's procedure (\citeyear{Lo1993}), which can incorporate censored observations, but cannot incorporate prior information on the functional form of $G$. We characterize each of these methods as a special or limiting case of the beta-Stacy bootstrap (c.f. Figure \ref{fig:1}), which, in comparison, can be applied with censored data and allows to incorporate prior information on the data-generating distribution. 

\begin{figure}[tbh]
\centering
\includegraphics[scale=0.6, trim={6cm 4cm 6cm 4cm}, clip]{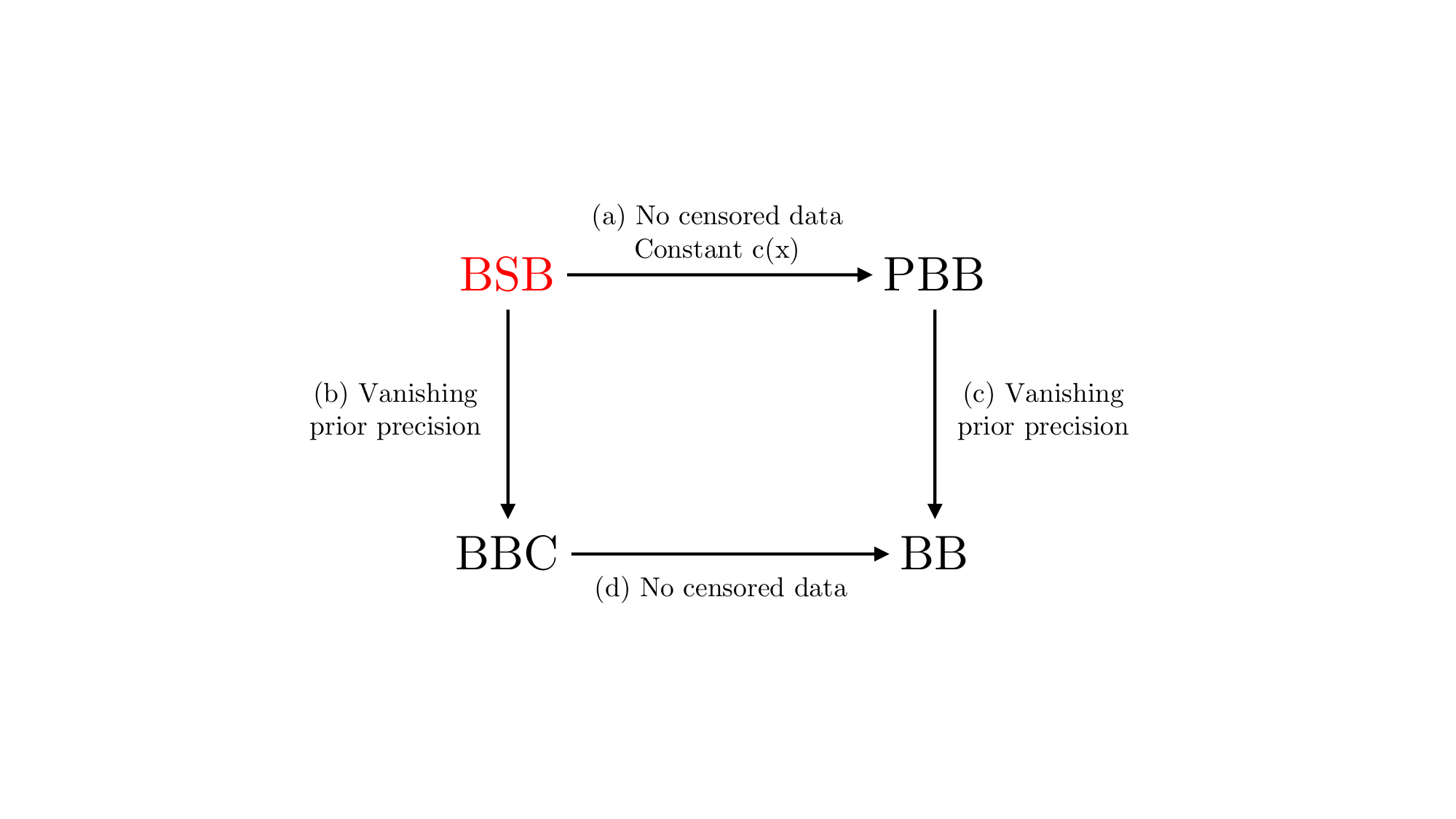}
\caption{Relations between different Bayesian bootstraps: BSB (in red), beta-Stacy Bootstrap (c.f. Section \ref{sec:bsb}); PBB, Proper Bayesian Bootstrap (\citealp{Muliere1996}); BBC, Bayesian Boostrap for Censored data (\citealp{Lo1993}); BB, classical Bayesian Boostrap (\citealp{Rubin1981}). The prior precision of the BSB is controlled by a function $c(x)$, while that of the PBB is controlled by a constant $k$. (a) the BSB and PBB coincide when there is no censoring and $c(x)=k$; (b) the BSB reduces to the BBC if $c(x)\rightarrow 0$ for every $x$; (c) the PBB reduces to the BB if $k\rightarrow 0$; (d) the BCC and BB coincide when there is no censoring. See Section \ref{sec:boot} for details.}
\label{fig:1}
\end{figure}

We note that, when $G$ has a beta-Stacy prior distribution, posterior inferences for $\phi(G)$ could also be based on algorithms for the simulation of L\'evy processes (c.f.  \citealp{Damien1995}, \citealp{walker1998}, \citealp{Ferguson1972}, and \citealp{Wolpert1998}; see also \citealp[Section 13.3.3]{Ghosal2017} and \citealp{DeBlasi2014} for a reviews and applications to the beta-Stacy process). With these methods, it is possible to generate approximate samples from the posterior law of $G$, and so also from the posterior law of $\phi(G)$. However, some algorithms (e.g. \citealp{Damien1995, walker1998}) can only generate approximate sample paths $\{G(x): x\in [0,T]\}$ over some bounded interval $[0,T]$. Hence, they may be difficult to apply to summaries $\phi(G)$ that depend on all values of $G$, such as the expected survival time. These cases are not problematic for the beta-Stacy bootstrap. Other approaches (e.g. \citealp{Ferguson1972, Wolpert1998}) can approximately sample full paths $\{G(x): x\in [0,+\infty)\}$ from the posterior law of $G$, but they are computationally more complicated than the beta-Stacy bootstrap (e.g. they may require auxiliary algorithms to sample from unnormalized distributions). 
 
The rest of the paper is structured as follows. In Section 2, we introduce notations and assumptions used throughout the manuscript. In Section 3, we review the definition and properties of the beta-Stacy process. In Section 4, we introduce the beta-Stacy bootstrap and study its approximation properties (most technical proofs are provided in Appendix). In Section 5, we describe the connections of the beta-Stacy bootstrap with other Bayesian bootstrap algorithms. In Section 6, we briefly describe a generalization of the beta-Stacy bootstrap to the $k$-sample setting. In Section 7, we describe a computational approach for implementing the beta-Stacy bootstrap. Using data from a clinical trial in hepatology \citep{Dickson1989}, in Section 8 we illustrate the beta-Stacy bootstrap and contrast it to an algorithm that generates approximate beta-Stacy sample paths. We describe this algorithm in the Supplementary Material, where we also report on additional comparative simulation studies (c.f. Section 8.4). Finally, Section 9 provides concluding remarks and discusses potential venues for future research. Code to replicate our analyses is available online at \url{https://github.com/andreaarfe/} or by request to the first author.

\section{Basic notations and assumptions}\label{sec:1}

If $Z:[0,+\infty)\rightarrow \mathbb{R}$ is a non-decreasing, right-continuous function with left-hand limits, we let $\overline{Z}(x)=1-F(x)$ and $\Delta Z(x)=Z(x)-Z(x-)$ for every $x\geq 0$ (where $Z(0-)=0$). We also identify $Z$ with its induced measure, writing $Zf=\int h(x)dZ(x)$ for any function $h(x)$, and $Z(S)=\int I\{x\in S\}dZ(x)$ for any $S\subseteq [0, +\infty)$. A function $h(x)$ is $Z$-\emph{integrable} if $Z|h|<+\infty$. We will denote with $D_h\subseteq [0,+\infty)$ the set of discontinuity points of $h$, and say that $h$ is $Z$-\emph{almost everywhere continuous} if $Z(D_h)=0$ (this is true when $h$ is continuous, and it implies that $h$ must be continuous at every atom of $Z$). If $Z$ is random, then its distribution is fully characterized by its \emph{Laplace functional}, i.e. the map $h(x)\mapsto \mathbb{E}[\exp(-Zh)]$, where $h(x)$ is any non-negative function \citep[Chapter 2]{Kallenberg2017}. 

We assume that $T_1,\ldots,T_n$ are independent, survival times, each with the same cumulative distribution function $G$ (with $G(0)=0$). In survival analysis applications, it is common for $T_1,\ldots,T_n$ to be \emph{(right) censored}. In such cases, the observed dataset is formed by $Y_1=(T_1^c,\delta_1)$, $\ldots$, $Y_n=(T_n^c,\delta_n)$, where, for each $i=1,\ldots,n$, $T_i^c = \min(T_i, C_i)$ is the censored version of $T_i$, $C_i$ its censoring time, and $\delta_i=I\{T_i\leq C_i\}$ its censoring indicator. As common in this setting, we assume that censoring is \emph{independent}---i.e. that $C_1,\ldots, C_n$ are independent of $T_1,\ldots,T_n$ \citep[Section 3.2]{Kalbfleisch2002}---and \emph{ignorable}---which essentially means that $C_1, \ldots, C_n$ can be treated as known constants when computing posterior distributions \citep{Heitjan1991, Heitjan1993}. We will also use the same notations when there is no censoring, in which case we simply define $T_i^c=T_i$ and $\delta_i=1$ for every $i=1,\ldots,n$. To refer to either these situations, we will simply say that the (potentially censored) survival times $Y_1,\ldots,Y_n$ are generated by $G$. 

Let $Y_1,\ldots,Y_n$ be (possibly censored) survival times generated from a distribution function $G$. Our aim is to make inferences on $\phi(G)=f(Gh_1,\ldots,Gh_k)$, a summary of $G$ defined by the real-valued functions $f(x_1,\ldots,x_k)$ and $h_1(x)$, $\ldots$, $h_k(x)$ (later we consider vectors of such summaries). Examples include the mean ($h_1(x)=x$, $f(x_1)=x_1$), the variance ($h_1(x)=x^2$, $h_2(x)=x$, $f(x_1,x_2)=x_1-x_2^2$), or the restricted mean survival time ($h_1(x)=\min(x,\tau)$, $f(x_1)=x_1$, $\tau>0$; \citealp{Royston2013}). From the Bayesian non-parametric perspective, any inference on $\phi(G)$ can be accomplished first by assuming that $G$ is distributed according to some prior process, then computing or approximating the posterior distribution of $G$, and so of $\phi(G)$, conditional on the observed data $Y_1,\ldots,Y_n$.

Let $k>0$ and $F$ be a distribution function over $[0, +\infty)$. We say that $G$ is a \emph{Dirichlet process} $\textrm{DP}(k,F)$ and write $G\sim \textrm{DP}(k,F)$ if, for all $0=x_0<x_1<\ldots<x_k<x_{k+1}=+\infty$, $(G(x_1)-G(x_0),\ldots,G(x_{k+1})-G(x_k))$ has Dirichlet distribution $\textrm{Dir}(\alpha_1,\alpha_2,\ldots,\alpha_{k+1})$, where $\alpha_i=k(F(x_i)-F(x_{i-1}))$. If $G\sim \textrm{DP}(k,F)$ and there is no censoring, the posterior law of $G$ conditional on $Y_1,\ldots,Y_n$ is $\textrm{DP}(k+n,F^*)$, with $F^*(x)=\frac{k}{k+n}F(x)+\frac{1}{k+n}\sum_{i=1}^nI\{T_i\leq x\}$ \citep{Ferguson1973}. However, if any $T_i$ is censored (i.e. $\delta_i=0$), then this posterior distribution is not a Dirichlet process anymore \citep{Ferguson1979, Walker1997}. In contrast, the beta-Stacy process is conjugate with respect to censored data, allowing simple posterior computations \citep{Walker1997}. (Later, we will also discuss the use prior processes different that the beta-Stacy in the considered setting.)

\section{The beta-Stacy process prior}\label{sec:bs}

The beta-Stacy process is the law of a random cumulative distribution function $G(x)$ with support in $[0,+\infty)$ \citep{Walker1997}. It is a \emph{neutral-to-the-right}, a type of non-parametric priors widely used with censored data \citep{Doksum1974, Ferguson1979}. This means that if $Z(x)=-\log(1-G(x))$, then the increments $Z(t_1)-Z(t_0)$, $Z(t_2)-Z(t_1)$, $\ldots$, $Z(t_k)-Z(t_{k-1})$ are independent for every $0=t_0<t_1<\ldots<t_k$ (\citealp[Chapter 13]{Ghosal2017}). 

Let $F(x)$ be a cumulative distribution function with $F(0)=0$ and jumps at locations $x_1<x_2<\ldots$ (so $\Delta F(x_j)>0$ for every $x_j$). Also let $c(x)>0$ for every $x\geq 0$. 
\begin{definition}[\citealp{Walker1997}] The cumulative distribution function $G$ is \emph{beta-Stacy} process $\textrm{BS}(c,F)$ if the Laplace functional of $Z$ satisfies
\begin{equation}\label{eqn:bslapl}
-\log\mathbb{E}[\exp(-Zh)] = \int_0^{+\infty}\int_0^{+\infty} (1-e^{-uh(x)}) \rho(x,u) dF(x)du
\end{equation}
for every $h(x)\geq 0$, where 
\begin{equation}\label{eqn:rho}
\rho(x,u) = \frac{1}{1-e^{-u}} c(x) \exp\left(-uc(x)\overline{F}(x)\right)r\left(uc(x)\Delta F(x)\right)
\end{equation}
and $r(u)=(1-e^{-u})/u$ for $u>0$, $r(0)=1$.
\end{definition}

The sample paths of $G(x)$ are discrete, as $Z(x)$ can only increase by an at most countable number of jumps \citep{Walker1997}. A jump always occur at each atom $x_j$ of $F(x)$; its size is $\Delta G(x_j)=U_j \prod_{x_i < x}(1-U_i)$ for independent $U_j=1-\exp(-\Delta Z(x_j))\sim \textrm{Beta}\left(c(x_j)\Delta F(x_j), c(x_j)\overline{F}(x_j)\right)$.  When $F$ is discrete, $G$ can only jump at each $x_j$, so $\overline{G}(x)=\prod_{x_j\leq x}(1-U_j)$ for $x>0$. Otherwise, some jumps also occur at random positions. Their locations and sizes are determined by the $x$- and $u$-coordinates of the points $(x,u)$ of a non-homogeneous Poisson process on $(0,+\infty)^2$; this is independent of each $U_j$ and has intensity measure $(1-e^{-u})^{-1} c(x) \exp\left(-uc(x)\overline{F}(x)\right)dF_c(x)du,$ where $F_c(x)=F(x)-\sum_{x_j\leq x}\Delta F(x_j)$ is the continuous part of $F$. 

If $G \sim \textrm{BS}(c,F)$, then $dG(x)/\overline{G}(x-) \sim \textrm{Beta}(c(x)dF(x),c(x)\overline{F}(x))$, infinitesimally speaking \citep{Walker1997}. Hence, $\mathbb{E}[dG(x)]=dF(x)$, and so $\mathbb{E}[G(x)]=F(x)$, for all $x>0$. Moreover, the variance of $dG(x)$ is a decreasing function of $c(x)$, with $\textrm{Var}(dG(x))\rightarrow 0$ as $c(x)\rightarrow +\infty$. The function $c(x)$ thus controls the dispersion of the distribution $\textrm{BS}(c,F)$ around its mean $F$. Throughout, we will assume that i) $F(x)<1$ for all $x>0$ and ii) $\epsilon\leq c(x)\leq \epsilon^{-1}$ for all $x>0$ and some $\epsilon\in (0,1)$. The former condition implies that $Z(x)$ has finite value (and so $G(x)<1$) with probability 1 for every $x>0$. The latter instead rule out extreme cases in which $dG(x)$ has null or arbitrarily large variance for some $x>0$. (Both are technical requirements needed to prove Lemma \ref{lemma:weakconv} in the Appendix.)   


As previously mentioned, the classical Dirichlet process is a special case of the beta-Stacy process. In fact, \citet{Walker1997} show that if $c(x)=k$ for all $x>0$, then $\textrm{BS}(c,F)=\textrm{DP}(k,F)$. Contrary to the Dirichlet process, however, the beta-Stacy process is conjugate with respect to right-censored data. Specifically, assume that i) $Y_1,\ldots,Y_n$ are generated by $G\sim \textrm{BS}(c,F)$; ii) $N(x)=\sum_{i=1}^nI\{T_i^c\leq x, \delta_i=1\}$ is the number of uncensored survival times less or equal than $x\geq 0$; and iii) $M(x)=\sum_{i=1}^n I\{T_i^c \geq x\}$ for all $x\geq 0$. Then we have the following result:
\begin{theorem}[Theorem 4, \citealt{Walker1997}]\label{thm:bspost}
The posterior distribution of $G$ conditional on $Y_1,\ldots,Y_n$ is the beta-Stacy process $\textrm{BS}(c^*,F^*)$, where
\begin{equation}\label{eqn:fpost}
F^*(x) = 1 - \prod_{t\in [0,x]} \left[ 1-\frac{c(t)dF(t)+dN(t)}{c(t)\overline{F}(t-)+M(t)} \right],
\end{equation}
\begin{equation}\label{eqn:cpost}
c^*(x) = \frac{c(x)\overline{F}(x-)+M(x)-\Delta N(x)}{1-F^*(x-)},
\end{equation}
and $\prod_{t\in [0,x]}$ is the product integral operator of \citet{Gill1990}.
\end{theorem}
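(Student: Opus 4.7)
My plan is to compute the posterior Laplace functional of $Z(x)=-\log\overline{G}(x)$ directly and show it again has the beta-Stacy form \eqref{eqn:bslapl}, with $(c,F)$ replaced by the claimed $(c^*,F^*)$. The classical result for neutral-to-the-right priors (Doksum--Ferguson) says that the posterior of an NTR process given right-censored data is again NTR, so the posterior of $Z$ is determined by its Laplace functional; my task therefore reduces to recomputing that functional. By the tower property of conditional expectation it suffices to perform the update one observation $(Y_i,\delta_i)$ at a time and iterate, so I would treat the one-observation case and then verify that the combined effect of $n$ updates produces \eqref{eqn:fpost} and \eqref{eqn:cpost}.

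For a censored observation at $Y=x$ (i.e.\ $\delta=0$), the likelihood factor is $\overline{G}(x)=\exp(-Z\mathbf{1}_{[0,x]})$ in the paper's notation, so the posterior Laplace functional is the ratio
\[
\mathbb{E}\bigl[e^{-Zh}\mid Y>x\bigr]=\frac{\mathbb{E}\bigl[\exp(-Z(h+\mathbf{1}_{[0,x]}))\bigr]}{\mathbb{E}\bigl[\exp(-Z\mathbf{1}_{[0,x]})\bigr]}.
\]
Substituting \eqref{eqn:bslapl} into numerator and denominator and taking log-differences collapses the expression to an integral of the same form as \eqref{eqn:bslapl} but with the kernel $\rho(t,u)$ multiplied by $e^{-u\mathbf{1}_{t\leq x}}$; rewriting this perturbed kernel in the canonical form \eqref{eqn:rho} amounts to replacing $c(t)\overline{F}(t)$ by $c(t)\overline{F}(t)+\mathbf{1}_{t\leq x}$, which is precisely the single-observation analogue of the $M(x)$ correction in \eqref{eqn:cpost}. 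For an uncensored observation at $Y=x$, the likelihood is the ``point mass'' $dG(x)$, which I would handle by an infinitesimal trick: condition on $Y\in[x,x+\epsilon]$, apply the censored-at-$x-$ update I just derived, and let $\epsilon\downarrow 0$. The limit creates (or enlarges) a fixed atom of $F^*$ at $x$; using the infinitesimal identity $dG(x)/\overline{G}(x-)\sim\mathrm{Beta}(c(x)dF(x),c(x)\overline{F}(x))$ together with beta--Bernoulli conjugacy, the posterior atom-size parameters become $(c(x)dF(x)+1,c(x)\overline{F}(x))$, matching the $dN(x)$ contribution in \eqref{eqn:fpost}.

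The remaining step is bookkeeping: after $n$ updates, repackage the Laplace functional in the $\textrm{BS}(c^*,F^*)$ form. The product-integral expression \eqref{eqn:fpost} arises naturally because each atomic update at $x_j$ multiplies $\overline{F}$ by $1-(c(x_j)\Delta F(x_j)+\Delta N(x_j))/(c(x_j)\overline{F}(x_j-)+M(x_j))$, while the continuous parts contribute a Nelson--Aalen-type hazard increment; \eqref{eqn:cpost} then simply records that the posterior ``effective sample size'' $c^*(x)\overline{F^*}(x-)$ equals the prior mass $c(x)\overline{F}(x-)$ plus the at-risk count $M(x)-\Delta N(x)$. The main technical obstacle I anticipate is the infinitesimal argument for uncensored observations -- in particular, carefully handling ties between prior atoms of $F$ and observed event times, and ensuring that the $\epsilon\downarrow 0$ limit can be taken inside the posterior Laplace functional so that the limiting object is indeed $\textrm{BS}(c^*,F^*)$ rather than some spurious variant. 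The rest is routine manipulation of \eqref{eqn:bslapl} and \eqref{eqn:rho}.
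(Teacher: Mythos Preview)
The paper does not prove this theorem at all: it is quoted verbatim as ``Theorem 4, \citet{Walker1997}'' and used as a black box. There is therefore nothing in the present paper to compare your proposal against.

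For what it is worth, your sketch is broadly the right strategy and is close in spirit to the original argument in \citet{Walker1997}: exploit the NTR/independent-increments structure of $Z$, update the L\'evy measure (equivalently, the Laplace functional \eqref{eqn:bslapl}) one observation at a time, and recognise the result as again being of beta-Stacy form. Your treatment of a censored observation via exponential tilting of $\rho$ is exactly how the ``at-risk'' correction $M(x)$ enters. The part you flag as the main obstacle---the uncensored case---is indeed where care is needed: the $\epsilon\downarrow 0$ heuristic is suggestive but not a proof, and in the original paper this step is handled by working directly with the fixed-jump component (the log-beta distributed $\Delta Z(x)$) and the random-jump component separately, rather than by a limiting argument. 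If you want a rigorous version, condition on the event $\{\Delta Z(x)>0\}$ and use the explicit beta law of $1-e^{-\Delta Z(x)}$ together with the independence of increments, instead of shrinking an interval; this also deals cleanly with ties between prior atoms and observed event times.
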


The posterior mean $F^*(x)=\mathbb{E}[G(x)|Y_1,\ldots,Y_n]$ from Equation \ref{eqn:fpost} converges to $\widehat{G}(x)=1-\prod_{t\in [0,x]} \left[1-dN(t)/M(t) \right]$, the standard Kaplan-Meier estimator of the distribution function, as $c(x)\rightarrow 0$ for all $x>0$ \citep{Walker1997}.

In practice, $F^*(x)$ can be computed as $F^*(x)=1-(1-F_d^*(x) )(1-F_c^*(x))$, where, respectively, $F_d^*$ and $F_c^*$ are the following discrete and continuous distribution functions \citep{Gill1990}. First, 
\begin{equation}\label{eqn:disc}
F^*_d(x) = 1- \prod \left[1-\frac{c(t)\Delta F(t)+ \Delta N(t)}{(c(t)\overline{F}(t-)+M(t)}\right],
\end{equation}
where the product ranges over all positive $t\leq x$ such that $\Delta F(t) + \Delta N(t)>0$ (which are at most countable). Second, 
\begin{equation}\label{eqn:cont}
F^*_c(x) =1-  \exp\left(- \int_0^x \frac{c(t)dF_c(t)}{c(t)(1-{F}(t-))+M(t)}\right),
\end{equation}
where $F_c(x) = F(x)-\sum_{x_j\leq x}\Delta F(x_j)$ is $F(x)$ with the discontinuities removed.


\section{The beta-Stacy bootstrap}\label{sec:bsb}

We now introduce the beta-Stacy bootstrap. Let $Y_1,\ldots,Y_n$ be (possibly censored) survival times generated by $G\sim\textrm{BS}(c,F)$. The proposed procedure approximately samples from the law of $\phi(G)=f(Gh_1,\ldots,Gh_k)$ conditional on $Y_1,\ldots,Y_n$. Better, it samples from an approximation to the law of $\phi(G^*)$, where $G^*\sim\textrm{BS}(c^*,F^*)$ and $F^*$, $c^*$ are from Equations \ref{eqn:fpost} and \ref{eqn:cpost}. 
 
\begin{algorithm}\label{def:bsb}
The beta-Stacy bootstrap is defined by the following steps:
\begin{enumerate}
\item Sample $X_1,\ldots,X_m$ from $F^*$ and determine the corresponding number $D$ of distinct values $X_{1,m}<\cdots<X_{D,m}$ (later we describe how to implement this step in practice and provide guidance on how to choose $m$).
\item Compute $\alpha_i=c^*(X_{i,m})\Delta F_m(X_{i,m})$, $\beta_i=c^*(X_{i,m})\overline{F}_m(X_{i,m})$ for every $i=1,\ldots,D$, where $F_m(x)=\sum_{i=1}^m I\{X_i\leq x\}/m$ is the empirical distribution function of $X_1,\ldots,X_m$.
\item For all $i=1\ldots,D$, generate $U_i\sim\textrm{Beta}(\alpha_i,\beta_i)$ (with $U_D=1$, as $\beta_D=0$) and let $Z_i=U_i\prod_{j=1}^{i-1}(1-U_j)$.
\item Let $G_m(x)=\sum_{i=1}^{D}I\{X_{i,m}\leq x\}Z_i$ and compute $\phi(G_m)=f(G_m h_1$, $\ldots$, $G_mh_k)$, where $G_m h_j = \sum_{i=1}^{D}h(X_{i,m})Z_i$ for all $j=1,\ldots,k$.
\item Output $\phi(G_m)$ as an approximate sample from the distribution of $\phi(G^*)$.
\end{enumerate}
\end{algorithm}

We note that, in step 2 above, $\Delta F_m(X_{i,m})$ and $\overline{F}_m(X_{i,m})$ are just the proportions of values $X_1,\ldots,X_m$ that are equal to or stricter that $X_{i,m}$, respectively. We also note that the law of $G_m$ in step 4 is the mixture of the beta-Stacy process $\textrm{BS}(c^*,F_m)$ with mixing measure $\prod_{i=1}^m F^*(dx_i)$, the joint law of $X_1,\ldots,X_m$. This generalizes the \emph{Dirichlet-multinomial process}, which is a mixture of Dirichlet process with mean $F_m$  (\citealp{Ishwaran2002, Muliere2003a}).

Some of the $X_1,\ldots,X_m$ sampled in step 1 can be equal to one of the observed uncensored event times among $Y_1,\ldots,Y_n$. This is because every observed event time is an atom of $F^*$, as shown by Equation \ref{eqn:disc}. However, some the values $X_1,\ldots,X_m$ can also be new observations sampled from the support of the prior mean $F$ (e.g. these may come from the continuous component of $F^*$ in Equation \ref{eqn:cont}). This deviates from other Bayesian bootstrap procedures, which typically only incorporate observed data \citep{Rubin1981, Lo1993}.

The following result shows that, if $h$ is $F^*$-integrable (so that the posterior mean of $Gh$ exists finite) and $F^*$-almost everywhere continuous (a necessary technical condition to prove this result; c.f. the Appendix), then the law of $G_m h$ generated by Algorithm \ref{def:bsb} using data $Y_1,\ldots,Y_n$ approximates the posterior law of $G h$ conditional on $Y_1,\ldots,Y_n$ for large $m$. More precisely, it shows that $G_m h$ convergences in law to $Gh$ conditional on $Y_1,\ldots,Y_n$, i.e. $\mathbb{E}[H(G_mh)|Y_1,\ldots,Y_n] \rightarrow \mathbb{E}[H(Gh)|Y_1,\ldots,Y_n]$ as $m\rightarrow +\infty$ for any bounded continuous function $H$ (note that the sample size $n$ is fixed, and only the number of resamples $m$ varies).

\begin{proposition}\label{prop:gdmconv}
If $h:[0,+\infty)\rightarrow\mathbb{R}$ is $F^*$-integrable and $F^*$-almost everywhere continuous, then $G_m h\rightarrow G^*h$ in law for $m\rightarrow+\infty$ conditional on $Y_1,\ldots,Y_n$.
\end{proposition}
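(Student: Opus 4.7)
The plan is to work through the Laplace functional of the auxiliary process $Z_m=-\log(1-G_m)$. First I would condition on $X_1,\ldots,X_m$: by the construction in Definition~\ref{def:bsb}, the conditional law of $G_m$ is the discrete beta-Stacy process $\textrm{BS}(c^*,F_m)$, because the stick-breaking weights $U_i$ in Step~3 have exactly the beta parameters that characterize this process on the atoms $X_{1,m},\ldots,X_{D,m}$ of $F_m$. Equation~(\ref{eqn:bslapl}) then gives, for each bounded non-negative measurable $\phi$,
\[
\mathbb{E}[\exp(-Z_m\phi)\mid X_1,\ldots,X_m]=\exp(-I_m(\phi)),
\]
with $I_m(\phi)=\int_0^{+\infty}\!\int_0^{+\infty}(1-e^{-u\phi(x)})\rho_m(x,u)\,dF_m(x)\,du$ and $\rho_m$ built from $c^*$ and $F_m$ via~(\ref{eqn:rho}).

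Passing to the limit $m\to+\infty$ relies on Glivenko-Cantelli ($F_m\to F^*$ uniformly, almost surely, since $X_1,\ldots,X_m$ are i.i.d.\ from $F^*$) together with dominated convergence. The $u$-integrand is well behaved: $(1-e^{-u\phi(x)})$ cancels the $1/u$ singularity of $(1-e^{-u})^{-1}$ near $u=0$; $r(uc^*(x)\Delta F_m(x))\to 1$ as the $1/m$-masses of $F_m$ vanish; and $\exp(-uc^*(x)\overline{F}^*(x))$ provides exponential decay in $u$ on intervals where $\overline{F}^*$ is bounded below. With the local boundedness of $c^*$ inherited from assumption~(i) and~(\ref{eqn:cpost}), this yields almost sure convergence $I_m(\phi)\to I^*(\phi)=-\log\mathbb{E}[\exp(-Z^*\phi)]$ with $Z^*=-\log(1-G^*)$. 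Bounded convergence upgrades this to convergence of the unconditional Laplace functionals, and the continuity theorem for Laplace functionals of random measures \citep[Chapter 4]{Kallenberg2017} then delivers $Z_m\to Z^*$ in law as random measures on $(0,+\infty)$; since $G_m(t)=1-\exp(-Z_m(t))$, this in particular gives joint convergence in law of $(G_m(t_1),\ldots,G_m(t_k))$ to $(G^*(t_1),\ldots,G^*(t_k))$ at continuity points of $F^*$.

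Finally, to extract $G_mh\to G^*h$ for an $F$-integrable $h$, I would approximate $h$ in $L^1(F^*)$ by simple functions $h_n=\sum_ic_i\mathbf{1}_{(s_i,t_i]}$ (possible since $F$-integrability of $h$ transfers to $F^*$-integrability through~(\ref{eqn:fpost})). For each such $h_n$, $G_mh_n=\sum_ic_i[G_m(t_i)-G_m(s_i)]$ is a continuous function of the vector of $G_m$-values, so the joint convergence above yields $G_mh_n\to G^*h_n$ in law. The beta-Stacy mean property gives $\mathbb{E}[|G_m(h-h_n)|\mid X_1,\ldots,X_m]\leq F_m|h-h_n|$, whence $\mathbb{E}[|G_m(h-h_n)|]\leq F^*|h-h_n|$ uniformly in $m$, and analogously for $G^*$; a three-$\epsilon$ argument then closes the proof. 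The main obstacle is the dominated convergence in Step~2: the $\rho_m$ built from the purely atomic $F_m$ must approximate the $\rho^*$ built from the mixed discrete-continuous $F^*$ (so that the $1/m$ masses of $F_m$ effectively ``merge'' into the continuous component of $I^*(\phi)$), and the absence of a uniform global bound on $c^*$ forces a localization to intervals $[0,T]$ with $F^*(T)<1$ followed by a separate tail argument as $T\to+\infty$.
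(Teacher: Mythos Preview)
Your proposal is correct and shares the paper's core mechanism: conditionally on $X_1,\ldots,X_m$, write the Laplace functional of $Z_m$ via Equation~(\ref{eqn:bslapl}) with $(c^*,F_m)$ in place of $(c,F)$, then show $I_m(\phi)\to I^*(\phi)$ almost surely using Glivenko--Cantelli and dominated convergence, with the localization to an interval $[0,l]$ on which $\overline{F}^*(l)>\delta>0$ supplying the integrable envelope in $u$. This is exactly the content of Lemma~\ref{lemma:weakconv}; the obstacle you flag (the purely atomic $\rho_m$ approximating the mixed $\rho^*$, and the lack of a global bound on $c^*$) is precisely what the paper handles by restricting to bounded $h$ with bounded support in the lemma.

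Where you diverge from the paper is in the extension step from bounded compactly supported test functions to general $F$-integrable $h$. The paper never passes through weak convergence of $G_m$ as a random measure. Instead it truncates $h$ directly: with $h_M(x)=h(x)I\{x\leq M,\ 0\leq h(x)\leq M\}$ it bounds $|\mathbb{E}[f(G_mh)]-\mathbb{E}[f(G^*h)]|$ for bounded Lipschitz $f$ by a term involving $\mathbb{E}[G_m(h-h_M)]=F^*(h-h_M)\to 0$ (this is your same mean identity), applies Lemma~\ref{lemma:weakconv} to $h_M$, and finishes via Portmanteau; the decomposition $h=h^+-h^-$ comes only at the very end. Your route instead upgrades the Laplace-functional convergence to $G_m\to G^*$ as random measures, reads off joint convergence of $(G_m(t_1),\ldots,G_m(t_k))$ at continuity points of $F^*$, and then approximates $h$ by simple functions with a three-$\epsilon$ argument. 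Both are valid; the paper's path is slightly more economical since it avoids the random-measure machinery and the restriction to continuity points, while yours makes the weak convergence of $G_m$ itself explicit---which is the statement the paper informally invokes just before Proposition~\ref{prop:gdmconv} but never actually proves.
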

\begin{proof}
The proof is provided in the Appendix, as it relies on multiple lemmas. 
\end{proof}

The following corollary implies that the beta-Stacy bootstrap can also approximates the joint distribution of vectors of the form $(G^*h_1$, $\ldots$, $G^*h_k)$. This is useful to approximate the joint distribution of multiple summaries of $G$, e.g. the joint distribution of it's first $k$ moments ($h_j(x)=x^j$ for $j=1,\ldots,k$). 

\begin{corollary}\label{cor:1} Let $h_1,\ldots,h_k$ be $F^*$-integrable and $F^*$-everywhere continuous. Then, $(G_mh_1$, $\ldots$, $G_mh_k)\rightarrow(G^*h_1$, $\ldots$, $G^*h_k)$ in law conditional on $Y_1,\ldots,Y_n$ for $m\rightarrow+\infty$, i.e. $\mathbb{E}[H(G_mh_1, \ldots, G_mh_k)|Y_1,\ldots,Y_n]$ $\rightarrow$ $\mathbb{E}[H(G^*h_1, \ldots, G^*h_k)|$ $Y_1,\ldots,Y_n]$ as $m\rightarrow +\infty$ for any bounded continuous function $H$.
\end{corollary}
\begin{proof}
Take $\lambda_1,\ldots,\lambda_k\in\mathbb{R}$ and define $h^*=\lambda_1 h_1 + \cdots + \lambda_k h_k$. By Proposition \ref{prop:gdmconv}, $G_m h^*\rightarrow G^*h^*$ for $m\rightarrow +\infty$. This implies that the joint characteristic function of $(G_mh_1$, $\ldots$, $G_mh_k)$ converges to that of $(G^*h_1,\ldots,G^*h_k)$: $\mathbb{E}[\exp(i\{\lambda_1G_mh_1 + \cdots + \lambda_k G_mh_k\})|Y_1,\ldots,Y_n]$ $=$ $\mathbb{E}[\exp(iG_mh^*)|Y_1,\ldots,Y_n]$ $\rightarrow$ $\mathbb{E}[\exp(iG^*h^*)|Y_1,\ldots,Y_n]$ $=$ $\mathbb{E}[\exp(i\{\lambda_1G^*h_1 + \cdots + \lambda_k G^*h_k\})|Y_1,\ldots,Y_n]$ for $m\rightarrow +\infty$. \qedhere
\end{proof}

Consequently, for large $m$ the law of the sample $\phi(G_m)=f(G_m h_1,\ldots,G_m h_k)$ generated by the beta-Stacy bootstrap is approximately the same of $\phi(G^*)=f(G^*h_1,\ldots,G^*h_k)$. In fact, if $f$ is continuous (as is the case for all examples considered in this paper), then by Corollary \ref{cor:1} and the continuous mapping theorem it holds that  $\phi(G_m)\rightarrow\phi(G^*)$ in law as $m\rightarrow+\infty$. Hence, if $m$ is sufficiently large (e.g. $m\approx 1,000$; see Section \ref{sec:app}), by repeating steps 1-4 above independently, it is possible to generate an approximate sample of arbitrary size from the posterior law of $\phi(G)$. More generally, the joint law of $(\phi_1(G_m), \ldots, \phi_k(G_m))$ converges to that of $(\phi_1(G^*), \ldots, \phi_p(G^*))$, where $\phi_j(G)=f_j(Gh_1,\ldots,Gh_k)$ and $f_j(x_1,\ldots,k)$ is continuous or all $j=1,\ldots p$. Thus the beta-Stacy bootstrap can also be used to approximate the joint posterior law of vectors of functionals of $G$.

\section{Connection with other Bayesian bootstraps}\label{sec:boot}

The proposed procedure is a Bayesian analogue of Efron's classical bootstrap \citep{Efron1981}. When censoring is possible, the latter is based on repeated sampling from the Kaplan-Meier estimator $\widehat{G}$ \citep{Efron1986}. Similarly, the beta-Stacy bootstrap samples from $F^*$ (c.f. step 1 of Definition \ref{def:bsb}), the beta-Stacy posterior mean from Theorem \ref{thm:bspost}. 

The beta-Stacy bootstrap generalizes several Bayesian variants of the classical bootstrap: the \emph{Bayesian bootstrap} of \citet{Rubin1981}, the \emph{proper Bayesian boostrap} of \citet{Muliere1996}, and the Bayesian boostrap for censored data of \citet{Lo1993}. The first two assume that there is no censoring, while the last allows for censored data. Their relationships are summarized in Figure \ref{fig:1}.

Given uncensored observations $Y_1,\ldots,Y_n$, the \emph{Bayesian bootstrap} of \citet{Rubin1981} assigns $\phi(G)$ the same law as $\phi(\sum_{i=1}^n W_i I\{Y_n\leq x\})$, where $(W_1,\ldots,W_n)$ has as a uniform Dirichlet distribution (and thus it is an exchangeably weighted bootstrap; c.f. \citealp{Praestgaard1993}). Consequently, Rubin's bootstrap approximates the posterior law of $\phi(G)$ induced by the improper Dirichlet process $G\sim\textrm{DP}(0,F)$, i.e. the law of $\phi(G^{*})$, where $G^*\sim\textrm{DP}(n,n^{-1}\sum_{i=1}^n I\{Y_i\leq x\})$ \citep[Section 4.7]{Ghosal2017}. 

In contrast, the proper Bayesian bootstrap of \citet{Muliere1996} is defined according to a procedure akin to Algorithm \ref{def:bsb}. In detail, step 1 is the same, since $F^*=\widehat{F}$ (there is no censoring); in step 2, take $c(x)=k$ for all $x$; finally, step 3 and 4 are the same. Hence, when there is no censoring, the procedure of \citet{Muliere1996} is a special case of the beta-Stacy bootstrap (in general, neither is exchangeably weighted; c.f. \citealp{Praestgaard1993}). Their relation is illustrated in Figure \ref{fig:1} by arrow (a).

As a consequence, if there is no censoring the proper Bayesian bootstrap approximates (for large $m$) the posterior law of $\phi(G)$ induced by a proper Dirichlet process $G\sim\textrm{DP}(k,F)$ with $k>0$. More precisely, it approximates the law of $\phi(G^*)$ with $G^*\sim\textrm{DP}(k+n,\widehat{F})$ and $\widehat{F}=\frac{k}{k+n}F(x)+\frac{1}{k+n}\sum_{i=1}^nI\{T_i\leq x\}$. Thus, as $k\rightarrow 0$ (i.e. as the prior precision of the Dirichlet process vanishes), the proper Bayesian bootstrap will approximate the same posterior distribution as the procedure of \citet{Rubin1981}---c.f. \citet{Muliere1996}. This is illustrated by arrow (c) in Figure \ref{fig:1}.

Lo's procedure (\citeyear{Lo1993}) extends Rubin's bootstrap (\citeyear{Rubin1981}) to the case where censoring is possible---they coincide when there is no censoring; c.f. arrow (d) in Figure \ref{fig:1}. Specifically, the Lo's Bayesian bootstrap for censored data approximates the posterior law of $\phi(G)$ obtained from the improper beta-Stacy prior $\textrm{BS}(0,F)$ or, equivalently, an improper beta process (\citealp{Lo1993}). Better, Lo's bootstrap (\citeyear{Lo1993}) approximates the law of $\phi(G)$ with that of $\phi(G^*)$, where $G^*\sim\textrm{BS}(\widehat{c}(x),\widehat{G}(x))$, $\widehat{c}(x)=M(x)/\widehat{G}(x)$, and $\widehat{G}(x)$ is the Kaplan-Meier estimator (c.f. Section \ref{sec:bs}). This is the limit of the beta-Stacy posterior law from Theorem \ref{thm:bspost} as $c(x)\rightarrow 0$ for all $x>0$. Thus, Lo's procedure (\citeyear{Lo1993}) is obtained from ours in the limit of $c(x)\rightarrow 0$ for all $x>0$ (c.f. arrow (b) in Figure \ref{fig:1}).

In addition to the ones mentioned above, the beta-Stacy bootstrap also generalizes the Bayesian bootstrap for finite populations of \citet{Lo1988} and the P\`olya urn bootstrap of \citet{Muliere1998b}. These are obtained from the beta-Stacy bootstrap as previously done, assuming that $F$ is discrete and of finite support. 

\section{Generalization to the \texorpdfstring{$k$}{k}-sample case}

We now consider the setting where censored observations are available from $k$ independent groups. Specifically, we observe a sample time-to-event data $Y_{j,1},\ldots,Y_{j,n_j}$ generated by the cumulative distribution function $G_j$ for all $j=1,\ldots,k$. A similar setting arises, for example, in randomized trials with $k$ treatment arms and a survival end-point. Without loss of generality, we suppose that $k=2$.

In this setting, the the goal is often to compare summary measures of survival across groups. These correspond to joint functionals of the form $\phi(G_1,G_2)=f(G_1 h_1, \ldots, G_1 h_p, G_2 h_1, \ldots, G_2 h_p)$, where $f(x_1,\ldots,x_p,y_1,\ldots,y_p)$ and $h_1(x)$, $\ldots$, $h_p(x)$ are real-valued functions. Examples include the difference in expected survival times ($p=1$, $h_1(x)=x$, $f(x_1,y_1)=x_1-y_1$) or the ratio of survival probabilities ($p=1$, $h_1(x)=I\{x\geq t\}, f(x_1,y_1)=x_1/y_1$). Similarly as in Section \ref{sec:bsb}, we assume that $h_i$ is $F^*$-integrable and $F^*$-almost everywhere continuous, as well as that $f(x_1,\ldots,x_p,y_1,\ldots,y_p)$ is continuous.

If $G_1\sim\textrm{BS}(c_1,F_1)$ and $G_2\sim\textrm{BS}(c_2,F_2)$ independently,  we can use the beta-Stacy bootstrap to approximate the posterior law of $\phi(G_1,G_2)$ given the censored data $Y_{1,1},\ldots,Y_{1,n_1}$ and $Y_{2,1},\ldots,Y_{1,n_2}$. From Theorem \ref{thm:bspost}, this is the law of $\phi(G_1^*,G_2^*)$, where: $G_1^*$ and $G_2^*$ are independent; $G_j^*\sim\textrm{BS}(c_j^*,F_j^*)$ for each $j=1,2$; and $c_j^*$, $F_j^*$ are computed from the $j$-th group's data using Equations \ref{eqn:fpost}-\ref{eqn:cpost}. 

In more detail, let $G_{j,m}$ be the distribution function generated by  one iteration of the beta-Stacy boostrap in group $j=1,2$ (c.f. step 4 of Definition \ref{def:bsb}). Then, for large $m$, $\phi(G_{1,m}, G_{2,m})$ will be an approximate sample from the law of $\phi(G_1^*,G_2^*)$, as shown by the following proposition.

\begin{proposition}
$\phi(G_{1,m}, G_{2,m}) \rightarrow \phi(G_1^*,G_2^*)$ for $m\rightarrow+\infty$ conditional on $Y_{1,1}$, $\ldots$, $Y_{1,n_1}$, $Y_{2,1}$, $\ldots$, $Y_{1,n_2}$.
\end{proposition}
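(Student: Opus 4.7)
The plan is to reduce the two-sample statement to Corollary \ref{cor:1} applied to each group separately, then combine using independence and the continuous mapping theorem. First, by construction the bootstrap replicates $G_{1,m}$ and $G_{2,m}$ are drawn independently, one from the posterior mechanism for each group; similarly, $G_1^*$ and $G_2^*$ are independent by assumption. So the proof splits naturally into a marginal step and a joining step.

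For the marginal step, I would apply Corollary \ref{cor:1} to each group separately to obtain, for $j=1,2$,
\begin{equation*}
(G_{j,m} h_1, \ldots, G_{j,m} h_p) \longrightarrow (G_j^* h_1, \ldots, G_j^* h_p)
\end{equation*}
in law as $m\to+\infty$, conditional on the observed data in group $j$. The hypotheses of the corollary transfer directly, since the $F$-integrability of each $h_i$ and continuity assumptions on $f$ are assumed in the two-sample setting.

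For the joining step, I would upgrade these two marginal convergences into a joint convergence of the $2p$-dimensional vector
\begin{equation*}
V_m = (G_{1,m} h_1, \ldots, G_{1,m} h_p, G_{2,m} h_1, \ldots, G_{2,m} h_p)
\end{equation*}
to its counterpart $V^*$ built from $(G_1^*, G_2^*)$. Because $G_{1,m}$ and $G_{2,m}$ are independent (and $G_1^*$, $G_2^*$ are independent in the limit), the joint characteristic function of $V_m$ factors as the product of the two group-specific characteristic functions. Each factor converges pointwise to the corresponding limit by the marginal convergence, so the product converges to the joint characteristic function of $V^*$. Hence $V_m \to V^*$ in law. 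Applying the continuous mapping theorem to $f$, which is continuous by assumption, yields $\phi(G_{1,m},G_{2,m}) = f(V_m) \to f(V^*) = \phi(G_1^*,G_2^*)$ in law.

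I do not expect any genuine obstacle: the statement is essentially a two-sample repackaging of Corollary \ref{cor:1}, and the only non-trivial ingredient is the factorization of characteristic functions under independence, which is standard. The one thing worth stating explicitly in the write-up is that independence of the bootstrap samples across groups is built into Definition \ref{def:bsb} (the procedure is run separately on each group's data), so no additional assumption is needed.
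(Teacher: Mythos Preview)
Your proposal is correct and matches the paper's own proof essentially line for line: apply Corollary \ref{cor:1} to each group, use independence of $G_{1,m}$ and $G_{2,m}$ (and of $G_1^*$, $G_2^*$) to pass from marginal to joint convergence, then invoke the continuous mapping theorem. The only difference is that you spell out the characteristic-function factorization explicitly, whereas the paper compresses this into a single sentence.
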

\begin{proof}
Since $G_{1,m}$ and $G_{2,m}$ are independent conditional on $Y_{1,1}$, $\ldots$, $Y_{1,n_1}$, $Y_{2,1}$, $\ldots$, $Y_{1,n_2}$, Corollary \ref{cor:1} implies that $(G_{1,m}h_1$, $\ldots$, $G_{1,m}h_p$, $G_{2,m}h_1$, $\ldots$, $G_{2,m}h_p)$ converges in law to $(G_1^*h_1$, $\ldots$, $G
_1^*h_p$, $G_2^*h_1$, $\ldots$, $G
_2^*h_p)$ as $m\rightarrow +\infty$. The thesis now follows from the continuous mapping theorem.
\end{proof}


\section{Implementing the beta-Stacy bootstrap}

To implement the beta-Stacy bootstrap, we use the following procedure to generate observations from $F^*$ (step 1 of Definition \ref{def:bsb}). To be concrete, we assume that $F$ is continuous (so $\Delta F(x)=0$ for all $x>0$) with density $f(x)$, but a similar method can also be used when $F$ is discrete.

Our approach is based on the relationship $F^*(x)=1-(1-F^*_d(x))(1-F^*_c(x))$ described in Section \ref{sec:bs}. This implies that if $X_d$ and $X_c$ are sampled independently from $F^*_d$ and $F^*_c$, respectively, then $X=\min(X_d,X_c)$ is a sample from $F^*$. We implement step 1 of Algorithm \ref{def:bsb} by iterating this process $m$ times. 

In detail, we sample $X_d$ from $F^*_d$ as follows. First, we note that, since $F$ is continuous, Equation \ref{eqn:disc} implies that $\Delta F^*_d(x)>0$ only if $\Delta N(x)>0$. Consequently, we can sample $X_d$ by defining it equal to $Y_j$ with probability $\Delta F^*_d(Y_j)$ for all $j=1,\ldots,n$, or $+\infty$ with probability $1-\sum_{j=1}^n \Delta F^*_d(Y_j)$. We do this using the inverse probability transform algorithm \citep[Chapter 3]{Robert2004}. 

Instead, we generate $X_c$ from $F^*_c$ in Equation \ref{eqn:fpost} using the inverse probability transform approach \citep[Chapter 3]{Robert2004}. Specifically, first we sample $U$ from the uniform distribution over $[0,1]$, then we define $X_c$ as the solution to the equation
$$\int _0^{X_c} \frac{c(x)f(x)}{c(x)(1-{F}(x))+M(x)}dx=-\log(1-U).$$
We approximate the above integral using Gaussian quadrature and compute $X_c$ using the bisection root-finding method \citep{Quarteroni2010}. 

\section{Empirical illustration}\label{sec:app}

We illustrate our procedure using survival data (freely available as part of the R dataset \texttt{survival::pbc}) from a randomized clinical trial of D-penicillamine for primary biliary cirrhosis of the liver \citep{Dickson1989}. In this trial, 312 cirrhosis patients were randomized to receive either D-penicilammine ($158$ patients) or placebo ($154$ patients). Patients in the D-penicilammine (respectively: placebo) arm accumulated a total of about $872$ ($842$) person-years of follow-up, during which $65$ ($60$) where observed. Overall, 187 (59.9\%) survival times were censored across study arms. Arm-specific Kaplan-Meier curves are shown in Figure \ref{fig:2}, panel a. 

Using these data, we compare the beta-Stacy bootstrap with another approach based on Algorithm a of \citet[Section 13.3.3]{Ghosal2017}---which we will denote as GvdVa. For any beta-Stacy process $G$, algorithm GvdVa can simulate approximate sample paths $\{G(x):x\in[0,T]\}$ over a prespecified bounded interval $[0,T]$. This algorithm is based on a discretization of $[0,T]$ by means of $N$ equally-spaced points, so that larger values of $N$ provide a better approximation to the beta-Stacy process (as we explain later, we use $N=5,000$ as reference in our analyses). We have chosen this algorithm as comparator because, compared to the others mentioned in the introduction, algorithm GvdVa is simpler to implement (like the beta-Stacy bootstrap, it is based on exact simulation steps and does not require sampling from unnormalized distributions; c.f. \citealp{DeBlasi2014}). Details are provided in the Supplementary Section S1. 

\begin{figure}[!th]
\centering
\includegraphics[scale=0.7]{./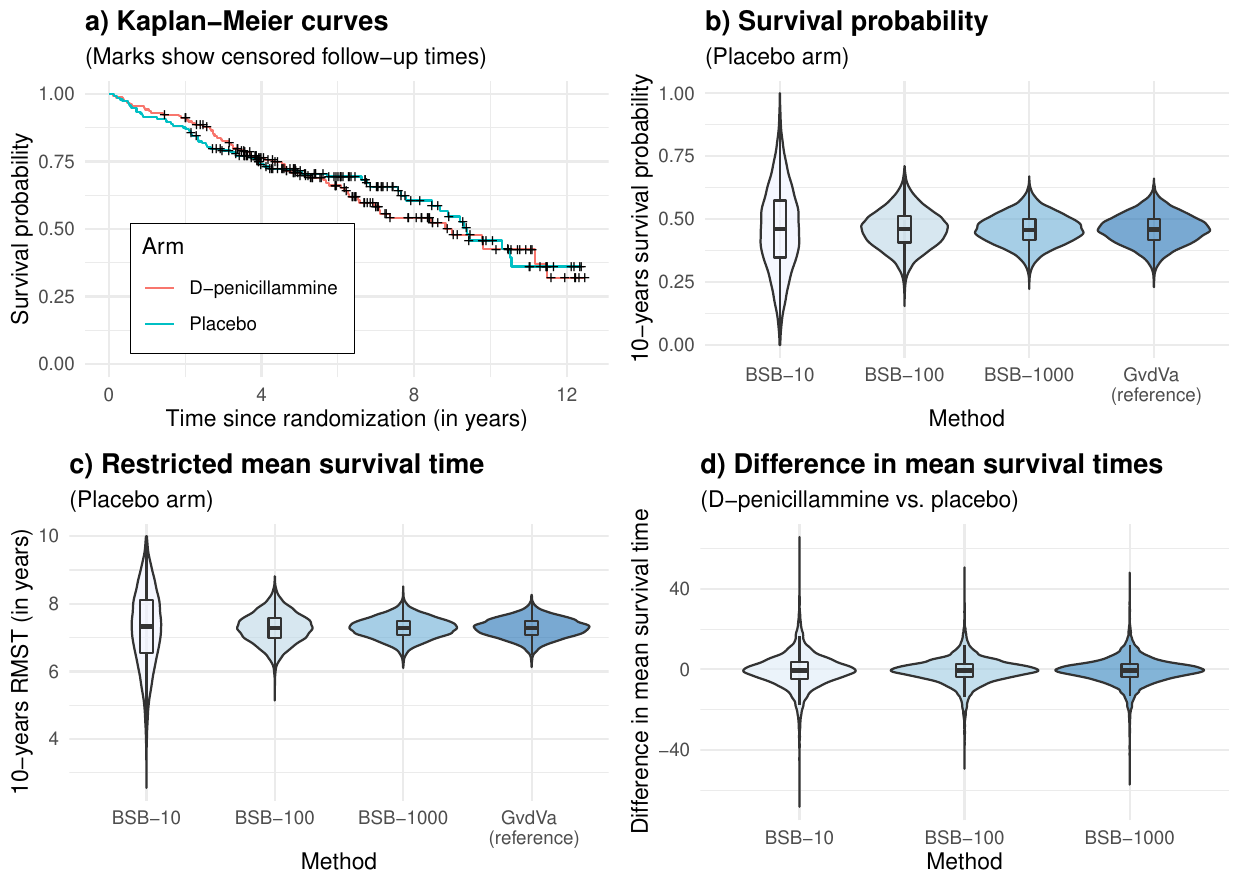}
\caption{Panel a: Kaplan-Meier curves for the Mayo clinic primary biliary chirrosis trial (c.f. Section \ref{sec:app}). Panels b and c: density estimates and box-plots of $10,000$ posterior samples of the 10-years survival probability (panel b) and the 10-years restricted mean-survival time (panel c) in the placebo arm; samples were obtained either with the beta-Stacy bootstrap (separately for $m=10$, $100$, and $1,000$) or using the reference GvdVa algorithm (c.f. Section \ref{sec:app}). Panel d: density estimates and box-plots of $10,000$ beta-Stacy bootstrap samples of the difference in mean survival times across arms (for $m=10$, $100$, and $1,000$).}
\label{fig:2}
\end{figure}

\subsection{Prior and posterior distributions}

Denote with $G_0$ and $G_1$ the cumulative distribution functions of survival times in the placebo and D-penicilammine arms, respectively. We assigned $G_i$ ($i=0,1$) an independent beta-Stacy prior $BS(c_i,F_i)$, where $F_i$ is the cumulative distribution function of an exponential random variable with median equal to 10 years. For simplicity, we assumed $c_i(x)=1$ for all $x\geq 0$. These prior distributions are fairly non-informative, since they are very diffuse around their expected values (Supplementary Figure S1). 

With these priors, the posterior means of $G_0$ and $G_1$ are practically indistinguishable from the corresponding Kaplan-Meier curves (Supplementary Figure S2). This is also confirmed by the Kolmogorov-Smirnov distances $D_i=\sup_{x\in [0,12]}|F_i^*(x)-\widehat{G}_i(x)|$  ($i=0,1)$, which compare the Kaplan-Meier estimate $\widehat{G}_i$ of $G_i$ and the corresponding posterior mean $F_i^*$ over the period from 0 to 12 years from randomization. We estimated that $D_0=0.004$ for the placebo arm, and $D_1=0.005$ for the D-penicilammine arm.

\subsection{Inference for single-sample summaries}\label{sec:res}

Using the beta-Stacy bootstrap and the GvdVa algorithm, we approximate the posterior distribution of two summaries of $G_0$: i) the 10-year survival probability in the placebo arm, i.e. $\phi_1(G_0)=1 - G_0(10) = G_0h$ with $h(x)=I\{x>10\}$, and ii) the 10-year restricted mean survival time in the placebo arm, i.e. $\phi_2(G_0)=\int_0^{10} [1-G_0(x)]dx = G_0 h'$ with $h'(x)=\min(x,10)$. 

In each case, we obtain 10,000 posterior samples. For the beta-Stacy bootstrap, we use $m=10$, $100$, and $1,000$, separately. To provide a reference against which to compare the beta-Stacy bootstrap, we implemented the GvdVa algorithm using a discretization of the time interval $[0,10]$ based on $N=5,000$ equally-spaced points. We chose by this value by iteratively increasing $N$ until the corresponding approximate posterior distributions of $\phi_1(G_0)$ and $\phi_2(G_0)$ seemed to stabilize (c.f. Supplementary Section S1). Note that algorithm GvdVa can be applied to $\phi_1(G_0)$ and $\phi_2(G_0)$ because they depend only on the values of $G_0(x)$ for $x\in[0, 10]$.

We use Kolmogor-Smirnov statistics to compare the distributions obtained from the beta-Stacy bootstrap and algorithm GvdVa. Specifically, for both summary measures $\phi_1(G^*_0)$ and $\phi_2(G^*_0)$ separately, we compute the statistics $\Delta_m = \sup_{x>0} |\widehat{F}_{1}(x)-\widehat{F}_{0,m}(x)|$, where: $m=10$, $100$, or $1,000$; $\widehat{F}_{0,m}(x)$ is the empirical distribution of the corresponding beta-Stacy bootstrap sample; and $\widehat{F}_{1}$ is the empirical distribution of the GvdVa samples.

Results are shown in Figures \ref{fig:2}b-c. For the 10-year survival probability (panel b), the distribution of beta-Stacy bootstrap samples approaches that obtained from algorithm GvdVa as $m$ increases. Indeed, the associated Kolmogorov-Smirnov statistics are $\Delta_{10}=0.24$, $\Delta_{100}=0.06$, and $\Delta_{1,000}=0.02$. Similar results where also obtained for the 10-year restricted mean survival (panel c), for which we computed $\Delta_{10}=0.32$, $\Delta_{100}=0.11$, and $\Delta_{1,000}=0.02$. The choice $m=1,000$ thus seems to have provided a good approximation to the posterior laws of interest.

\subsection{Difference in mean survival times}

We now consider the posterior law of the two-sample summary $\phi(G_1, G_0)=G_1 h - G_0 h$ defined by $h(x)=x$, i.e. the difference in mean survival times between the D-penicilammine arm and the placebo arm. In this case, it's hard to use the GvdVa algorithm to approximate the beta-Stacy posterior, because $h(x)$ has infinite support. On the contrary, we can still use the beta-Stacy bootstrap directly to generate approximate samples from the posterior law of $\phi(G_1, G_0)$.

In Figure 2d, we show the distribution of 10,000 posterior samples of the difference in mean survival times obtained with the beta-Stacy bootstrap, separately using $m=10$, $100$, or $1,000$. Compatibly with the previous results, the distribution of posterior samples stabilizes as $m$ increases. In particular, the density estimates and quartiles of the distributions for $m=100$ and $m=1,000$ are almost indistinguishable (the Kolmogorov-Smirnov distance between the two sample distributions was 0.007). These results again suggest that $m=1,000$ provided a good approximation to the relevant posterior distribution.

\subsection{Additional simulation study}

In Supplementary Section S2, we report a simulation study aimed at assessing how the proportion of censored sample observations may impact the beta-Stacy bootstrap approximation of the beta-Stacy posterior distribution. Results suggest that the proportion of censored observations does not impact the quality of the approximation in comparison to the reference GvdVa algorithm, provided that $m$ is sufficiently large. Compared to a scenario with no  censoring, simulation scenarios with higher censoring rates (up to 75\% of censored data) did not require larger values of $m$ to obtain the same quality of approximation ($m=1,000$ seemed to be acceptable in all considered scenarios).

\section{Concluding remarks}\label{sec:concl}

The beta-Stacy bootstrap is an algorithm to perform Bayesian non-parametric inference with censored data. This procedure generates approximate sampler a beta-Stacy process posterior \citep{Walker1997} without the need to tune Markov Chain Monte Carlo methods. The quality of the approximation is controlled by the number $m$ of samples from the posterior mean distribution (c.f. step 1 of Algorithm \ref{def:bsb}). Our simulations suggest that $m=1,000$ may generally provide a good approximation, independently of the proportion of event times that are affected by censoring. 

In place of the beta-Stacy process, many other non-parametric prior processes could  be used to estimate summaries $\phi(G)$ of the survival distribution function. Examples include piece-wise hazard processes \citep{Arjas1994}, the gamma and extended gamma processes \citep{Kalbfleisch1978, Dykstra1981}, P\`olya trees \citep{Mauldin1992, Muliere1997}, mixture models driven by random measures \citep{Kottas2006, RivaPalacio2021}, or Bayesian Additive Regression Trees \citep{Sparapani2016}. In comparison to the beta-Stacy process, computations using alternative prior processes may require the use of Markov Chain Monte Carlo samplers due to lack of conjugacy. Whether new Bayesian bootstraps could be derived for other conjugate processes (e.g \citealp{Muliere1997}) is a question for future research. 

Inference using the beta-Stacy process $BS(c, F)$ requires specification of both the precision function $c$ and the prior mean distribution function $F$. To avoid having to specify these in full, we might instead define $c_\theta$ and/or $F_\theta$ as a function of a scalar or multi-dimensional parameter $\theta$ (e.g. we might take $c(x)=\theta$ for all $x\geq 0$). Then, instead of specifying a single value of $\theta$, we could assign it a prior distribution $\pi(\theta)$. This approach leads to the specification of a mixture of beta-Stacy processes as the prior distribution for $G$, i.e. $G\sim \int BS(c_\theta, F_\theta)\pi(d\theta)$, in the same approach of \citet{Antoniak1974}. In future work, we will evaluate the use of the beta-Stacy bootstrap in Monte Carlo schemes for such mixtures  and their generalization for competing risks data \citep{Arfe2018a}.

\appendix
\section*{Appendix: technical lemmas and proofs}
\renewcommand{\thesection}{A} 

To prove Proposition \ref{prop:gdmconv}, we will use results related to convergence in law of random measures---c.f. \citet{Daley2007}, Section 11.1; see also \citet{Kallenberg2017}, Chapter 4. Let $W$ and $W_m$ be random measures over $[0,+\infty)$ that are finite on bounded intervals for every integer $m\geq 1$. Then, $W_m$ \emph{converges in law} to $W$ if and only if $W_mh \rightarrow Wh$ in law (as real-valued random variables) for every bounded continuous function $h:[0,+\infty)\rightarrow\mathbb{R}$ with bounded support. This happens if and only $\mathbb{E}[\exp(-W_mh)] \rightarrow \mathbb{E}[\exp(-Wh)]$ as $m\rightarrow +\infty$ for every such function $h$ \citep[Proposition 1.11.VIII]{Daley2007}.  

Let $D[0,+\infty)$ be the space of all right-continuous functions with left-hand limits with the Skorokhod topology \citep[Chapter VI, Section 1b]{Jacod2003}. The following result links convergence in law of  $W_m$ to $W$ to convergence in law of their cumulative distribution functions as random elements of $D[0,+\infty)$.

\begin{lemma}\label{lemma:skorokhod}
The random measure $W_m$ converges in law to $W$ if and only if the function $W_m(x)=W_m([0,x])$ converges in law to $W(x)=W([0,x])$ ($x\geq 0$) as random elements of $D[0,+\infty)$ with the Skorokhod topology. 
\end{lemma}
\begin{proof}
This result can be shown using a similar argument as that presented before Lemma 11.1.XI of \cite{Daley2007}.
\end{proof}

We now prove the following lemma, which implies that the measure $Z_m$ converges in law to $Z$ conditionally on $Y_1,\ldots,Y_n$, i.e. $\mathbb{E}[\exp(-Z_mh)|Y_1,\ldots,Y_n] \rightarrow \mathbb{E}[\exp(-Zh)|Y_1,\ldots,Y_n]$ as $m\rightarrow +\infty$ for every bounded continuous $h$ with bounded support. For simplicity of notation, let $\mathbb{E}_n[\cdot]=\mathbb{E}[\cdot|Y_1,\ldots,Y_n]$ be the conditional expectation with respect to $Y_1,\ldots,Y_n$. Also let $\mathbb{E}_{n,m}[\cdot]$ $=\mathbb{E}_n[\cdot|X_1,\ldots,X_m]$, where $X_1,\ldots,X_m$ are the variables from Step 1 of Algorithm \ref{def:bsb}. 

For any given $Y_1,\ldots, Y_n$ and $X_1,\ldots, X_m$, we define $\rho^*(x,u)$ (respectively: $\rho_m(x,u)$) as the function $\rho(x,u)$ in Equation \ref{eqn:rho}, but with $c^*$ and $F^*$ (respectively: $c^*$ and $F_m$) in place of $c$ and $F$. With these notations, by Lemma 1 of \citet{Ferguson1974} it is $-\log\mathbb{E}_n[\exp(-Z^*h)]=\int_{(0,+\infty)^2} (1-\exp(-uh(x))\rho^*(x,u)dF^*(x)du$ and, similarly,  
$-\log\mathbb{E}_{n,m}[\exp(-Z_m h)] = \int_{(0,+\infty)^2} $ $(1-\exp(-uh(x))\rho_m(x,u)dF_m(x)du$. 

\begin{lemma}\label{lemma:weakconv}
i) If $h:[0,+\infty)\rightarrow[0,+\infty)$ is a bounded measurable function with bounded support (but not necessarily continuous), then, conditionally on $Y_1,\ldots,Y_n$,  $Z_m h\rightarrow Z^*h$ in law as $m\rightarrow+\infty$. ii) The previous statement also holds for every bounded measurable $h:[0,+\infty)\rightarrow\mathbb{R}$ with bounded support. 
\end{lemma}
\begin{proof}
First we prove point (i). By dominated convergence, it suffices to show that $\mathbb{E}_{n,m}[\exp(-Z_mh)]$ $\rightarrow  \mathbb{E}_n[\exp(-Z^*h)]$ as $n\rightarrow+\infty$ with probability 1 for all functions $h$ such that $0\leq h(x) \leq H$ and $h(x)=0$ for all $x>l$ for some $H, l>0$. To do so, we note that $(1-e^{-uh(x)})\rho_m(x,u)\rightarrow (1-e^{-uh(x)})\rho^*(x,u)$ uniformly in $x$, and so $g_m(u)=\int_0^l (1-e^{-uh(x)})\rho_m(x,u)dF_m(x) \rightarrow g(u)=\int_0^l (1-e^{-uh(x)})\rho^*(x,u)dF^*(x)$, for all fixed $u$ with probability 1. This follows from the Glivenko-Cantelli theorem, the fact that $c^*(x)$ is bounded, and because the functions $x\mapsto e^{-x}$ and $r(x)$ are bounded and Lipschitz over $(0,+\infty)$. Now, fix $\delta>0$ such that $\overline{F}^*(l)>\delta>0$ (this is possible because $F(x)<1$ for all $x>0$). With probability 1, $\overline{F}_m(x)>\delta$ for all $x\leq l$ and large $m$. In such case, since $\epsilon\leq c(x) \leq \epsilon^{-1}$ and $1-\exp(-uh(x))\leq \min(uH,1)$, it is $g_m(u)\leq w(u)=\gamma\min(uH,1)\exp(-u\gamma\delta)/(1-e^{-u})$ for $u>0$ and some $\gamma>0$. Since $-\log\mathbb{E}_{n,m}[e^{-Z_mh}]=\int_0^{+\infty}g_m(u)du$, $-\log \mathbb{E}_n[e^{-Z^*h}]=\int_0^{+\infty}g(u)du$, and $\int_0^{+\infty}w(u)du<+\infty$, the thesis follows by dominated convergence.

To prove point (ii), let $h:[0,+\infty)\rightarrow\mathbb{R}$ be a bounded measurable function with bounded support. Define $h^+(x)=\max(0,h(x))$ and $h^-(x)=-\min(0, h(x))$, which are both bounded non-negative measurable functions with bounded support. Now, by point (i), $\mathbb{E}_n[\exp(-\lambda_1 G_m h^+ - \lambda_2G_m h^-)]$ $=$ $\mathbb{E}_n[\exp(- G_m (\lambda_1 h^+ +\lambda_2 h^-))]$ $\rightarrow$ $\mathbb{E}_n[\exp(-G^* (\lambda_1 h^+ +\lambda_2 h^-))]$ $=$ $\mathbb{E}_n[\exp(-\lambda_1 G^* h^+ - \lambda_2G^* h^-)]$ as $m\rightarrow +\infty$ for every $\lambda_1,\lambda_2\geq 0$. Consequently, $(G_m h^+,G_m h^-)\rightarrow (G^* h^+,G^* h^-)$ in law as a random vector by convergence of the corresponding joint Laplace transform \citep[Theorem 4.3]{Kallenberg1997}. Since $h=h^+-h^-$, the thesis follows from the continuous mapping theorem. \qedhere
\end{proof}

Using Lemmas \ref{lemma:skorokhod} and \ref{lemma:weakconv}, we can now prove that, conditionally on $Y_1,\ldots,Y_n$, $G_mh$ converges in distribution to $G^*h$ for every bounded continuous function $h$ with bounded support.

\begin{lemma}\label{lemma:Gweak}
$G_m$ converges in law to $G^*$ as $m\rightarrow+\infty$ conditionally on $Y_1,\ldots,Y_n$.
\end{lemma}
\begin{proof}
Let $\phi:D[0,+\infty)\rightarrow D[0,+\infty)$ be defined by $\phi(z(x))=1-\exp(-z(x))$ ($x\geq 0$) for every $z\in D[0,+\infty)$. Since the map $x\mapsto 1-\exp(-x)$ defined for every real $x\geq 0$ is Lipschitz-continuous, $\phi$ is also continuous with respect to the Skorokhod topology on $D[0,+\infty)$. Since $G_m(x)=\phi(Z_m(x))$ and $G^*(x)=\phi(Z^*(x))$ for every $x\geq 0$, the thesis now follows from Lemma \ref{lemma:skorokhod}, Lemma  \ref{lemma:weakconv} and the continuous mapping theorem. \qedhere

\end{proof}

We are now ready to prove Proposition \ref{prop:gdmconv}.

\begin{proof}[Proof of Proposition \ref{prop:gdmconv}]
From Lemma \ref{lemma:Gweak} and Proposition 4.19 of \citet{Kallenberg2017}, it follows that $G_mh\rightarrow Gh$ in law conditionally on $Y_1,\ldots,Y_n$ for every bounded continuous function $h$ (not necessarily with bounded support). Then, using an argument like the one in the proof of Lemma 4.12 of \citet{Kallenberg2017}, it follows that the same is true for every bounded measurable function $h$ (not necessarily continuous) such that $F^*(D_h)=0$. We now show that the thesis holds for any $F^*$-integrable $h$ (not necessarily bounded), provided that $F^*(D_h)=0$. In fact, by an argument like the one used to prove point (ii) of Lemma \ref{lemma:weakconv}, it suffices to show that this is true for any such non-negative $h$. 

Consequently, suppose that $h(x)\geq 0$ for every $x\geq 0$. By the Portmanteau theorem, it suffices to show that $|\mathbb{E}_n[f(G_mh)]-\mathbb{E}_n[f(G^*h)]|\rightarrow 0$ as $m\rightarrow+\infty$ for any real-valued function $f(x)$ such that $|f(x)|\leq K$ and $|f(x)-f(y)|\leq L|x-y|$ for some $K,L\geq 0$. To do this, let $M\geq 0$ and define $h_M(x) = \min(M, h(x))$. Then, $|\mathbb{E}_n[f(G_mh)]-\mathbb{E}_n[f(G^*h)]|\leq \Delta_1 + \Delta_2 + \Delta_3$, where $\Delta_1=\sup_m|\mathbb{E}_n[f(G_mh)]-\mathbb{E}_n[f(G_mh_M)]|$, $\Delta_2 = |\mathbb{E}_n[f(G_mh_M)]-\mathbb{E}_n[f(G^*h_M)]|$, and $\Delta_3 = |\mathbb{E}_n[f(G^*h_M)]-\mathbb{E}_n[f(G^*h)]|$. Now, $\Delta_2 \rightarrow 0$ for $m\rightarrow +\infty$, because $h_M$ is bounded, measurable, and $D_{h_M}\subseteq D_h$. Consequently, $\limsup_{m\rightarrow+\infty}\left|\mathbb{E}_n[f(G_mh)]-\mathbb{E}_n[f(G^*h)]\right|\leq \Delta_1 + \Delta_3$. Since $0\leq h_M(x) \leq h(x)$ for every $x\geq 0$ and $G^*\sim BS(c^*, F^*)$, we also have that $\Delta_3 \leq L\mathbb{E}_n[G^*(h-h_M)]=LF^*(h-h_M)$. By the Markov inequality, for every $\delta>0$ it holds that $\Delta_1 \leq \sup_m(\delta L + 2\delta^{-1}KL\mathbb{E}_n \left[ G_m (h-h_M)\right])=\delta L + 2\delta^{-1}KL F^*(h-h_M)$---where the last equality follows from $\mathbb{E}_n \left[ G_m (h-h_M)\right]=\mathbb{E}_n[\mathbb{E}_{n,m}\{G_m (h-h_M)\}]=\mathbb{E}_n[F_m(h-h_M)]$ (c.f. Section \ref{sec:bsb}). As a consequence, $\limsup_{m\rightarrow+\infty}\left|\mathbb{E}_n[f(G_mh)]-\mathbb{E}_n[f(G^*h)]\right|\leq \delta L + (L+2\delta^{-1}KL)F^*(h-h_M)$. However, by the dominated convergence theorem, $F^*(h-h_M)\rightarrow 0$ as $M\rightarrow+\infty$. Hence, the thesis follows by first letting $M\rightarrow+\infty$ and then $\delta\rightarrow 0$ from above.\qedhere

\end{proof}

\section*{Supplementary information}

Supplementary results and figures are available on-line at the publishers' website. This include supplementary Figures S1 and S2, a description of the GvdVa algorithm (Section S1), and additional simulation results based on varying censoring rates (Section S2). The code used to implement the beta-Stacy bootstrap and reproduce our results is available at \url{https://github.com/andreaarfe/}.

\section*{Acknowledgments}

We thank Alejandra Avalos-Pacheco, Massimiliano Russo, and Giovanni Parmigiani for their useful comments. Part of this work was developed while the first author was supported by a post-doctoral fellowship at the Harvard-MIT Center for Regulatory Science, Harvard Medical School. Analyses were conducted in R (version 4.1.2) using the libraries \texttt{mvQuad}, \texttt{Rcpp}, and \texttt{ggplot2}.

\title{Supplementary information for ``A general Bayesian bootstrap for censored data based on the beta-Stacy process''}
\author{Andrea Arf\`e\footnote{Department of Epidemiology and Biostatistics, Memorial Sloan Kettering Cancer Center, New York, NY 10017, United States. Website: \url{andreaarfe.wordpress.com}. E-mail: \url{arfea@mskcc.org}.}, Pietro Muliere\footnote{Department of Decision Sciences, Bocconi University, 20136 Milan, Italy. E-mail: \url{pietro.muliere@unibocconi.it}}}

\newpage
\renewcommand{\thesection}{S\arabic{section}}
\renewcommand{\thefigure}{S\arabic{figure}}
\setcounter{section}{0}

\section*{Supplementary material}
\setcounter{section}{0}

\section{Description of the GvdVa algorithm} 

We describe an approach to simulate approximate finite sample paths of a beta-Stacy process $G$ based on Algorithm a of Ghosal and van der Vaart \citep[Section 13.3.3]{Ghosal2017}. In Section 8 of the main manuscript, we refer to this method as the GvdVa algorithm, and we use it as a comparator for the beta-Stacy bootstrap. As in Section 8, let $G^*\sim BS(c^*,F^*)$, with $c^*$ and $F^*$ defined as in Theorem 3.1, and fix $T>0$ ($T=10$ in Section 8).

Algorithm GvdVa generates approximate sample paths $\{G^*(x): x\in[0,T]\}$ from the law of $G^*$ as follows. These can be used to approximate the posterior distribution of summaries $\phi(G)$, provided $\phi$ only depends on the values of $G(x)$ for $x\in[0,T]$. To do so, first, sample an approximate path $\{G^*(x): x\in[0,T]\}$, compute $\phi(G^*)$. This last value is a single sample from the law of $\phi(G)$ conditional on $Y_1,\ldots,Y_n$ (more can be obtained by iteration). 

To describe algorithm GvdVa, let $y_1<\ldots<y_d$ be the ordered points $y\in[0,T]$ where $\Delta F^*(y)>0$ (these correspond to the distinct values among $Y_1,\ldots,Y_n$ that are not censored and do not exceed $T$). Also let $z_i=(i-1)T/N$ for all $i=1,\ldots,N$, where $N$ is a positive integer. Then, algorithm GvdVa proceeds as follows:
\begin{enumerate}
\item Sample $U_j\sim \textrm{Beta}\left(c^*(y_j)\Delta F^*(y_j), c^*(y_j)\overline{F}^*(y_j)\right)$ for all $j=1,\ldots,d$. Here, $\Delta F^*(y_j)$ is computed as $\Delta F^*(y_j)=(F^*_d(y_j)-F^*_d(y_{j-1}))(1-F^*_c(y_j))$, where $y_0=0$ and $F^*_c$, $F^*_d$ are the continuous and discrete parts of $F^*$, respectively (c.f. Sections 3 and 7). 
\item Let $\alpha_i=c^*(z_i)[F_c^*(z_{i+1})-F_c^*(z_{i})]$ and $\beta_i=c^*(z_i)[1-F_c^*(z_{i+1})]$ for all $i=1,\ldots,N$.
\item Independently sample $V_i\sim\textrm{Beta}(\alpha_i,\beta_i)$ for all $i=1,\ldots,N$.
\item For all $x\in[0,T]$, define
$$G_N^*(x)=1-\prod_{y_j\leq x}(1-U_j)\prod_{z_i\leq x}(1-V_i).$$
\end{enumerate}

Assuming that $c$ and $F$ are continuous, as done in Sections 7 and 8, then Theorem 2 of \citet{Walker1997} implies that the stochastic process $(G_N^*(x))_{x\in[0,T]}$ defined by step 4 above converges in law to $(G^*(x))_{x\in[0,T]}$ as $N\rightarrow+\infty$ (i.e. as the grid of points $z_1,\ldots,z_N$ becomes increasingly fine) as a random element of the space $D[0,T]$ with the Skorokhod topology. As a consequence, for large $N$, the random function $G_N^*$ can be considered an approximate sample path of $G^*\sim BS(c^*,F^*)$ over $[0,T]$. 

In Section 8, we implemented the GvdVa algorithm with $N=5,000$ as a reference for the beta-Stacy bootstrap. We chose this value based on the generate approximate samples of the two summaries $\phi_1(G_0)$ (10-years survival probability) and $\phi_2(G_0)$ (10-years RMST) considered in Section 8.2. Specifically, we implemented algorithm GvdVa using increasing values of $N$ until the approximate posterior distributions of these two summaries seemed to stabilize. Results for $N=5, 50, 500, 5000$ are reported in the figure below. 

\includegraphics[scale=0.60]{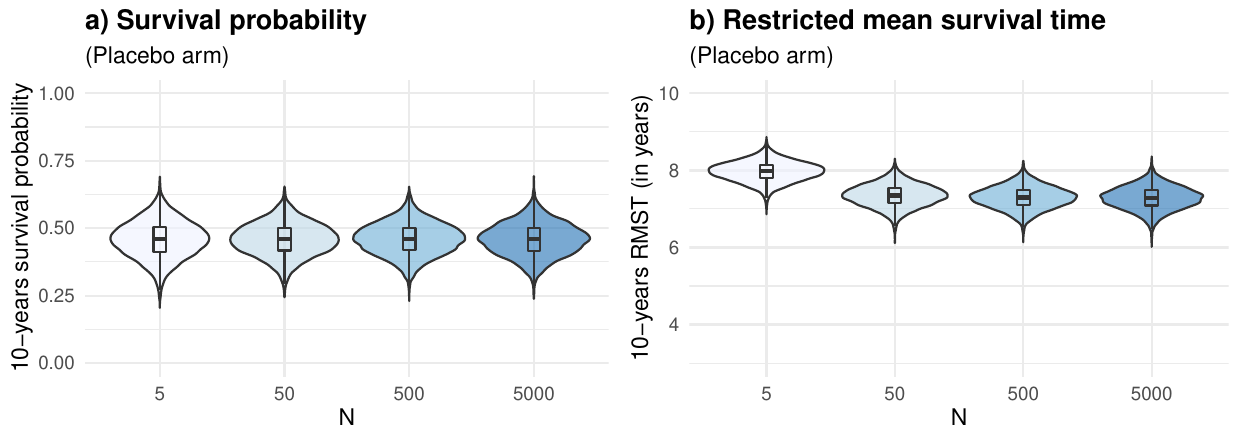}

\section{Additional simulation study}

We describe a simulation study aimed to asses whether the approximation to the beta-Stacy posterior of a summary $\phi(G)$ is impacted by the proportion of censored observations in the data. Specifically, we simulated survival according to scenarios defined by different values of the censoring probability $p_{cens}$ (i.e. the probability that a survival time is censored). For each scenario, we compared the empirical distribution of the posterior samples of $\phi_1(G_0)$ (10-years survival probability) and $\phi_2(G_0)$ (10-years RMST; c.f. Section 8.2) obtained with the beta-Stacy bootstrap (using $m=10 , 100$, or $1,000$) or the reference GvdVa algorithm (c.f. Section 8 and Supplementary Section S1 above).

In more detail, simulations were implemented according to the following steps. We fixed $\widehat{\lambda}_0 = 60/842$ and $n = 154$, the estimated event rate (deaths per person-year of follow-up) and number of patients in the placebo arm of the trial described in Section 8. Subsequently, we iterated the following steps 10,000 times separately for each value of $p_{cens}=0$ (i.e. no censoring), $0.25$, $0.5$, and $0.75$:
\begin{enumerate}
\item We simulated $T_1^c,\ldots,T_n^c$ independently from an exponential distribution with rate parameter equal to $\widehat{\lambda}_0$.
\item For each $i=1,\ldots,n$ independently, we set $\delta_i = 1$ with probability $p_{cens}$ or $\delta_i = 0$ with probability $1-p_{cens}$.
\item We formed the simulated dataset $Y_1=(T^c_1, \delta_1),\ldots, Y_n=(T_n^c, \delta_n)$ (c.f. the notation in Section 2).
\item We used the simulated dataset to generate a single approximate sample from each posterior laws of $\phi_1(G_0)$ and $\phi_2(G_0)$, using both the beta-Stacy bootstrap or the reference GvdVa algorithm. 
\item We stored the generated approximate posterior samples for analysis. 
\end{enumerate}

The figures below shows the empirical distribution of the posterior samples obtained with these steps. The results confirm that, as predicted from Proposition 4.1, the beta-Stacy bootstrap distribution will tend to that of the reference GvdVa algorithm for increasing $m$, independently of the proportion of censored observations. Regardless of the censoring rate, the Kolmogorov-Smirnov distance between the empirical distributions of beta-Stacy-bootstrap samples with $m=1,000$ and samples from the reference algorithm was less or equal than 0.03.

\begin{center}

\includegraphics[scale=0.70]{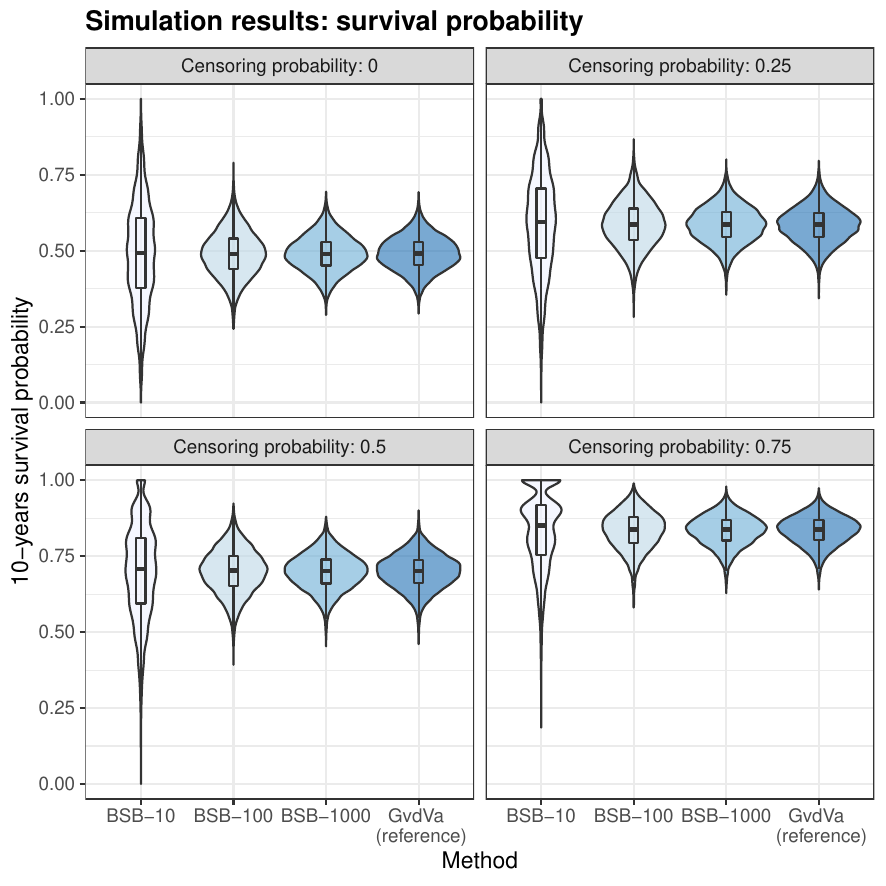}

\includegraphics[scale=0.70]{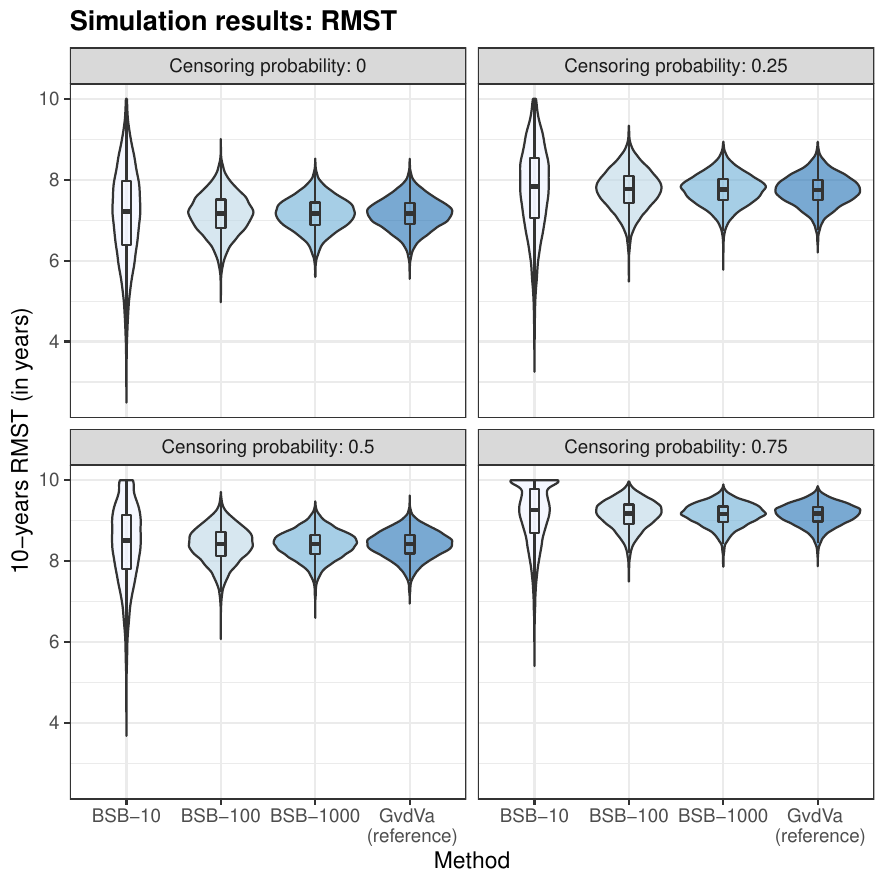}

\end{center}

\newpage

\section{Supplementary figures}

\begin{figure}[!th]
\centering
\caption{Comparison of the Kaplan-Meier curves (in orange) and beta-Stacy posterior means (in blue) obtained in the placebo (panel a) and D-penicilammine (panel b) arms of the Mayo Clinic's trial (c.f. Section 7).}
\includegraphics[scale=0.60]{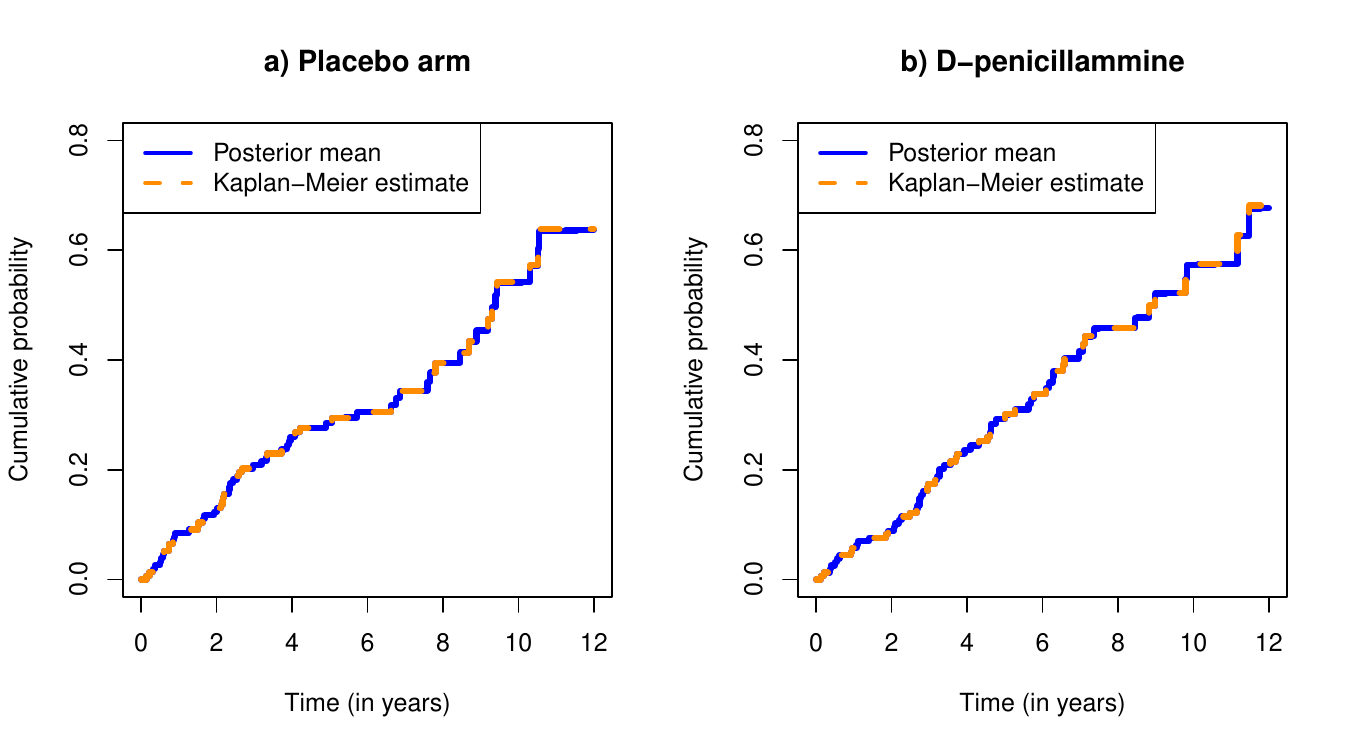}
\end{figure}

\begin{figure}[!th]
\centering
\caption{Plot of 100 survival curves (in black) sampled from the beta-Stacy prior distribution defined in Section 7. The red curve is the prior mean, i.e. the distribution function of an exponential random variable with median equal to 10 years. Samples were obtained using the approach described in Section S1 above.}
\includegraphics[scale=0.6]{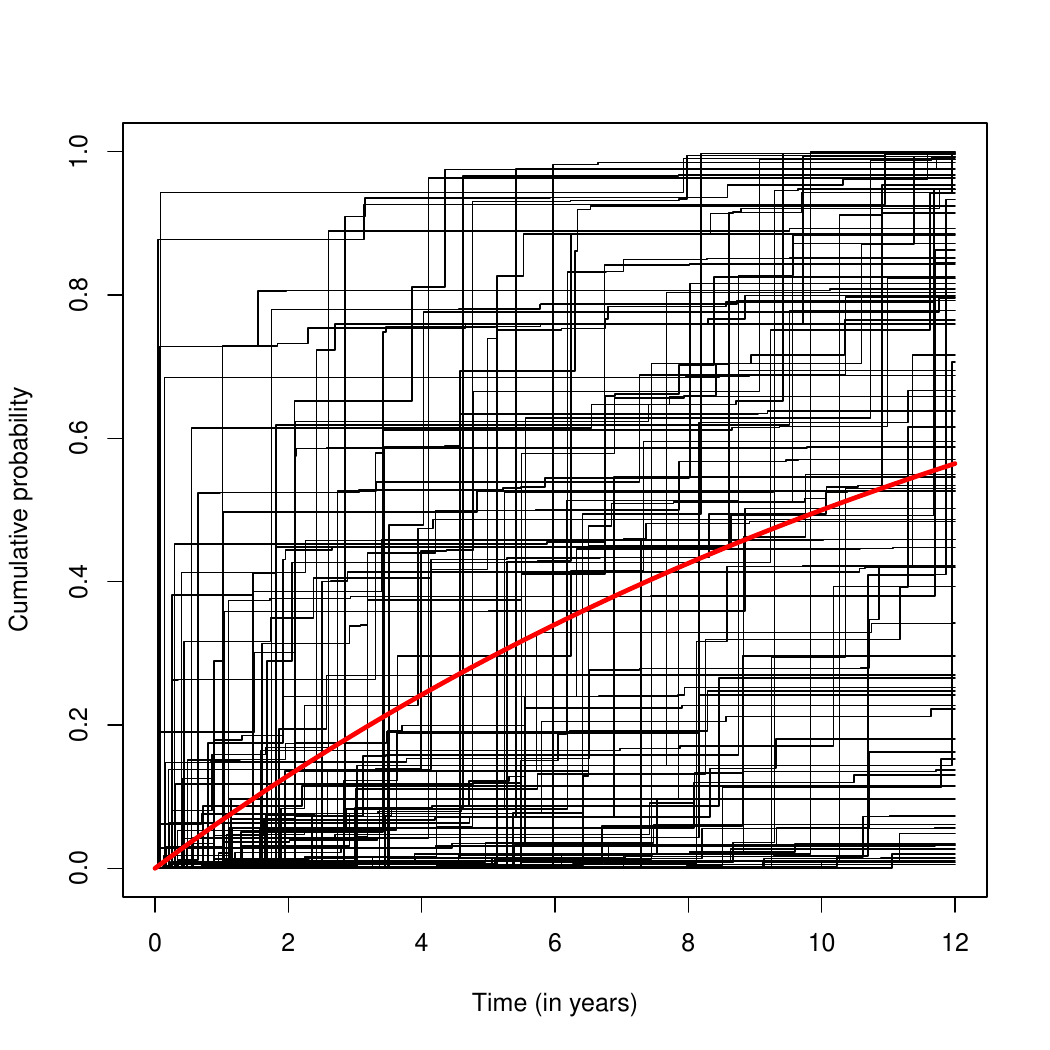}
\end{figure}

\newpage

\newpage

\bibliography{bibliography}
\bibliographystyle{rss}

\end{document}